\documentclass[letterpaper, 10 pt, conference]{ieeetran}  % Comment this line out if you need a4paper

\IEEEoverridecommandlockouts                              % This command is only needed if 
\newcommand{\norm}[2]{\big\Vert #2 \big\Vert_{#1}}

\newcommand{\ldlt}{\text{$\mathrm{LDL^\top}$}}

\newtheorem{theorem}{Theorem}

\newtheorem{corollary}[theorem]{Corollary}
\newtheorem{definition}{Definition}

\usepackage{subcaption}
\usepackage{floatrow}
\usepackage{graphicx} % for pdf, bitmapped graphics files
\usepackage{comment}
\usepackage{cite}
\usepackage{algorithm}
\usepackage{algorithmic}
\usepackage{booktabs}

\usepackage{amsmath} % assumes amsmath package installed
\usepackage{amssymb}
\usepackage{bm}
\usepackage{balance}
\usepackage{hyperref}
\hypersetup{
    colorlinks=true,
    linkcolor=black,
    filecolor=blue,      
    urlcolor=black,
    pdftitle={Paper},
    pdfpagemode=FullScreen,
    }

\usepackage{xcolor}
\floatstyle{plaintop}
\restylefloat{table}
\usepackage[tableposition=top]{caption}

\title{\Large \bf
PDO-Split: A Real-Time Solver for Constrained Potential-Dominant Games
}

\author{Zhiyuan Zhang \qquad Yue Guan \qquad Panagiotis Tsiotras$^{1}$% <-this % stops a space
\thanks{This work is sponsored by ONR awards N00014-23-2308 and N00014-23-2353}
\thanks{$^{1}$ School of Aerospace Engineering, Institute for Robotics and Intelligent Machines, Georgia Institute of Technology, Atlanta, GA, USA. Emails:
        {\tt\small \{zzhang615, yguan44, tsiotras\}@gatech.edu}}
}
\renewcommand{\baselinestretch}{0.98}

\begin{document}
\maketitle
\thispagestyle{empty}
\pagestyle{empty}

\begin{abstract}
    We present Potential-Dominant Operator Splitting (PDO-Split), a real-time generalized Nash equilibrium solver for constrained dynamic multi-agent problems with potential-dominant structure.
    PDO-Split targets the computational bottleneck in game-theoretic model predictive control, where repeatedly solving the equilibrium KKT system becomes costly as the number of agents increases.
    We introduce a class of \emph{$\gamma$-potential-dominant games}, in which the potential (or cooperative) structure dominates the competitive interactions between the agents. 
    We show that this structure allows PDO-Split to use the potential component as a preconditioner while incorporating the competitive component through iterative refinement.
    We establish convergence guarantees and exploit the symmetry of the potential component through a reusable \ldlt{} factorization to accelerate the linear solvers.
    We evaluate PDO-Split across several practical multi-agent planning problems, including eight-car racing and six-vehicle ramp merging, achieving substantially improved runtime and convergence rate over existing DDP-based and Newton-based methods, especially with a larger number of agents. 
    The algorithm is also experimentally validated on miniature autonomous race car platforms, demonstrating the approach's real-time capability. 
\end{abstract}

%%%%%%%%%%%%%%%%%%%%%%%%%%%%%%%%%%%%%%%%%%%%%%%%%
\section{Introduction}
%%%%%%%%%%%%%%%%%%%%%%%%%%%%%%%%%%%%%%%%%%%%%%%%%

\IEEEPARstart{M}{odern} robotic applications, such as warehouse automation~\cite{stern2019multi, li2021lifelong} and autonomous driving~\cite{algames, schwarting2019social}, require many agents to reason and react in highly coupled multi-agent environments.
As each agent’s decisions affect the performance and the constraints of the other agents, these environments are highly dynamic and require decision-making at sufficiently fast timescales to ensure adaptability, safety and performance.
Real-time multi-robot planning therefore demands algorithms that are capable of handling complex strategic interactions while being computationally efficient.

A common solution concept for constrained multi-agent dynamic games is the Generalized Nash Equilibrium (GNE).
A GNE is a set of control strategies for all agents such that no agent can unilaterally improve its own cost without violating the shared constraints.
Real-time numerical GNE solvers are fundamental for enabling MPC-style receding horizon control in interactive multi-agent environments.
However, existing methods~\cite{algames, ilqgame, rd3g} often remain too slow for real-time implementation and scale poorly with the number of agents.
A fundamental source of this difficulty is the agents' competing objectives, which can induce cycling behavior in the best-response dynamics~\cite{balduzzi2019open}, causing standard optimization methods to converge slowly, stagnate, or fail altogether.
In contrast to competitive games, potential games~\cite{monderer1996potential} represent agents’ incentives by a common potential function.
A simple example of a potential game is a fully cooperative game.
This structure reduces equilibrium computation to an optimization problem closely resembling single-agent control and enables the use of efficient numerical optimization methods.

Most real-world interactions, however, are neither purely competitive nor exactly potential.
Instead, they are often a mixture of cooperative interactions with weaker competitive incentives, where the cooperative/potential component dominates.
For example, in highway merging, all vehicles share the primary objective of avoiding collisions, maintaining speed, and minimizing control effort, while competing only secondarily over relative position or travel time.
Such a potential-dominant structure creates an opportunity to develop fast game-theoretic planning algorithms that exploit the alignment among agents while treating strategic conflicts as disturbances. 

In this work, we introduce a class of \emph{$\gamma$-potential-dominant games} and characterize it through the spectral properties of the gradient dynamics.
We solve for the GNE as a residual minimization problem and decompose the gradient operator into a symmetric potential component and a skew-symmetric competitive component, and use the former as a preconditioner.
An \ldlt{} factorization of the potential component is implemented using CasADi~\cite{casadi} to further accelerate the implementation by reusing the factorization and the differentiation of the computation graph. 
Physical experiments and numerical simulations demonstrate substantial improvements in computational speed across a range of multi-agent scenarios.

%%%%%%%%%%%%%%%%%%%%%%%%%%%%%%%%%%%%%%%%%%%%%%%%%
\section{Related Work}
%%%%%%%%%%%%%%%%%%%%%%%%%%%%%%%%%%%%%%%%%%%%%%%%%

We focus on constrained dynamic games and solvers for Generalized Nash Equilibria (GNE). 
A GNE is a set of control strategies such that no agent can improve its performance by unilaterally deviating without violating constraints. 

\subsection{Generic Game Solvers}
Existing GNE solvers for constrained dynamic games can be broadly categorized into DDP-based and Newton-based methods.

\paragraph{DDP-based methods}

These methods extend Differential Dynamic Programming (DDP) to multi-agent settings.
Minimax DDP~\cite{minimax_ddp} formulates each stage as a minimax game, whereas OPTGame3~\cite{optgame} and iLQGames~\cite{ilqgame} extend this approach to general-sum games.
Starting from a nominal trajectory, these methods construct a local linear-quadratic (LQ) approximation of the dynamic game, solve the resulting LQ games through a backward optimality pass, and update the nominal trajectory through a forward rollout using the resulting control policies.
The algorithm iterates between the backward and forward passes until convergence.
While the standard LQ-game formulation does not directly accommodate constraints, they can be incorporated through barrier functions~\cite{ilqgame_barrier1,ilqgame_barrier2} or augmented-Lagrangian formulations~\cite{ilqgame_lagrange}.
These methods have been applied to real-time robotic problems, including drone racing and traffic interactions~\cite{ilqgame_racing2}.
However, their repeated forward and backward passes can become computationally expensive and scale poorly as the number of agents and the strength of their interactions increase~\cite{rd3g}.

\paragraph{Newton-based methods}

Newton-based methods directly solve the coupled first-order optimality (KKT) conditions of the game.
Both single-shooting~\cite{dldu_1,dldu_2} and multiple-shooting formulations~\cite{algames,rd3g} have been considered.
Single-shooting considers only control inputs as optimization variables and eliminates state variables via forward simulations.
Multiple shooting, on the other hand, retains both states and controls as optimization variables and enforces the dynamics as equality constraints.
Although multiple shooting introduces more decision variables, it preserves temporal sparsity and, generally, offers greater numerical robustness and faster convergence~\cite{algames}.

A prominent multiple-shooting solver is ALGAMES~\cite{algames}, which combines a Newton-type method with an augmented-Lagrangian treatment of the constraints.
ALGAMES demonstrated shorter solution times than the DDP-based iLQGames solver in benchmark problems involving intersection traversal and highway merging~\cite{rios2016automated}.
More recently, RD3G~\cite{rd3g} further improved computational performance by handling active-set constraints and demonstrated real-time receding-horizon control on physical autonomous racing platforms.

Despite these advances, Newton-based methods treat the game as a generic coupled nonlinear system and therefore require repeated solutions involving the full game KKT matrix.
The cost of these operations grows rapidly with the number of agents and the planning horizon.
Moreover, competitive interactions introduce asymmetric and rotational components into the gradient dynamics, which can lead to poor conditioning, oscillatory updates, and slow convergence.
These limitations motivate the exploitation of additional game structure to achieve faster convergence.

\subsection{Near-Potential Game Solvers}

A structure commonly exploited to simplify equilibrium computation is the potential-game framework~\cite{monderer1996potential}.
In a Markov potential game, the change in each agent's long-term utility resulting from a unilateral policy deviation is captured by a common potential function~\cite{macua2018learning}.
Equilibrium computation can therefore be reduced to optimizing this potential function, making the problem resemble a single-agent optimization problem rather than a coupled equilibrium problem~\cite{maheshwari2025independent, maheshwari2024convergence}.

Recent work has extended this framework to near-potential games~\cite{guo2025markov, candogan2011flows}, in which changes in the agents' utilities are represented by a potential function up to an approximation error.
This near-potential structure permits the use of single-agent optimization methods, such as gradient descent, while providing bounds on the resulting Nash gap.

Although optimization can be efficient once a potential function is available, identifying such a function may itself require additional effort.
Learning-based approaches~\cite{kalaria2024alpha} provide a promising way to construct approximate potential functions, but their performance may depend on the amount of available data, training time, and problem-specific design choices.
Since the learned potential only approximates the original game, the resulting solution generally corresponds to an approximate Nash equilibrium, with accuracy determined jointly by both the potential-approximation error and the optimization error.

In this work, we introduce a class of $\gamma$-potential-dominant games, defined through the spectral properties of the local gradient dynamics.
Rather than constructing or learning a global potential function, our approach uses local gradient information to quantify the dominance of the potential component over the competitive component.
We show that this structure can be exploited to develop faster algorithms that solve the original game directly, without requiring explicit knowledge of a global potential function.

%%%%%%%%%%%%%%%%%%%%%%%%%%%%%%%%%%%%%%%%%%%%%%%%%
\section{Problem Formulation}
%%%%%%%%%%%%%%%%%%%%%%%%%%%%%%%%%%%%%%%%%%%%%%%%%

We consider a discrete-time $N$-agent game. 
For each agent $i\!\in\!\{1, \ldots,N\}$, the vectors $x^i_t \!\in\! \mathbb{R}^n$ and $u^i_t\!\in\!\mathbb{R}^m$ represent the state and the control input of agent $i$ at stage~$t$, respectively. 
For notational brevity, we adopt the following convention:
$x_t$ denotes the concatenated states of all agents at $t$;
$x^i$ denotes the trajectory of agent $i$ over all stages;
and $x$ denotes the joint trajectory of all agents over the entire horizon $t \!\in\! \{0, \ldots,T\}$.
We use $x^{-i}$ to denote the collection of state variables of all agents other than $i$.
Similar notation applies to control $u$ and Lagrangian multipliers introduced later.

\subsection{Generalized Nash Equilibrium}

The Generalized Nash Equilibrium Problem (GNEP) is formulated as a coupled optimization problem, where agent $i$ performs the following minimization:

\vspace{-0.2in}
\begin{subequations}\label{eq:game}
\begin{align}
\min_{x^i,\,u^i} \quad 
& J^i(x, u^i)=
\sum_{t=0}^{T} J^i_t(x_t, u^i_t) \label{eq:game_obj}\\
\textrm{s.t.} \quad 
& x^i_{t+1} = f(x^i_t, u^i_t), 
&& \hspace{-0.3in} t = 0, \dots, T-1, \label{eq:dyn_con}\\
& h(x, u) \le 0.  \label{eq:ineq_con}
%& h(x_k, u_k) \le 0, 
%&& \hspace{-0.3in} k = 1, \dots, T. \label{eq:ineq_con}
\end{align}
\end{subequations}
Here, $f(\cdot)$ denotes the agent dynamics, $h(\cdot)$ represents shared state and control constraints (e.g., collision avoidance, feasible region), and $J^i_t(\cdot)$ is agent $i$'s running cost.%
\footnote[2]{For exposition simplicity, we consider identical dynamics and constraints, although our analysis extends naturally to heterogeneous settings.}

In this work, we focus on open-loop control strategies, i.e., $u^i_t = u^i(x_0, t)$, where the control inputs depend only on the initial state $x_0$ and the stage $t$. 
When implemented on a physical platform, feedback can be obtained through MPC-style execution of the open-loop control. 

We seek a Generalized Nash Equilibrium (GNE), as it characterizes a broad class of multi-agent interactions ranging from fully cooperative to fully competitive settings.

\begin{definition}
    A joint control strategy $u^* \!=\! (u^{1*}, \dots, u^{N*})$ is a local Generalized Nash Equilibrium if, for every agent~$i$, there exists a \emph{feasible} neighborhood $\mathcal{N}^i$ of $u^{i*}$ such that

    \vspace{-0.2in}
    \begin{equation*}
    % \label{eq:gne}
    J^i(x(u^{i*}, u^{-i*}), u^{i*}) \leq J^i(x(u^{i}, u^{-i*}), u^{i}),  \forall  u^i \in \mathcal{N}^i(u^{i*}).
    \end{equation*}
    % for all $u^i \in \mathcal{N}^i$.
    %$\cap \Omega^i(u^{-i*})$, where $\Omega^i(u^{-i*})$ is the set of feasible controls for agent $i$ given the fixed strategies of other agents $u^{-i*}$.
\end{definition}

The Generalized Nash Equilibrium extends the classical Nash Equilibrium to settings with constraints. 
At a GNE, no agent has an incentive to unilaterally deviate from its equilibrium strategy $u^{i*}$; that is, with $u^{-i*}$ fixed, agent $i$ cannot improve its performance by selecting an alternative control without violating the constraints. 
In general, computing GNE strategies $u^*$ is challenging because it requires solving $N$ coupled, constrained optimal control problems.

%%%%%%%%%%%%%%%%%%%%%%%%%%%%%%%%%%%%%%%%%%%%%%%%%
\section{Spectrum of Multi-Agent Interactions}
%%%%%%%%%%%%%%%%%%%%%%%%%%%%%%%%%%%%%%%%%%%%%%%%%

A GNE is commonly characterized by the first-order optimality (KKT) conditions of each agent's constrained optimization problem. 
For agent $i$, define the Lagrangian of agent $i$ associated with \eqref{eq:game} as

\vspace{-0.2in}
\begin{align}\label{eq:lagrangian}
    \mathcal{L}^i (x,u,\lambda^i, \mu^i) 
    &= \sum_{t=0}^{T} J^i_t(x_t, u^i_t)
    + \mu^\top h(x, u)\nonumber \\ %\sum_{k=1}^{T} (\mu^i_k)^\top h(x, u)
    &\quad + \sum_{t=0}^{T-1} (\lambda^{i}_t)^\top \big[f(x_t^i, u_t^i) - x_{t+1}^i\big],
\end{align}
where $\mu$ and $\lambda^i$ denote the dual variables associated with the inequality constraints and the dynamics, respectively. 
Let $\mathcal{F}^i(x^i,u^i)=0$ represent the vectorized dynamics constraints across stages in~\eqref{eq:dyn_con}.

Then, the KKT conditions for agent $i$ are given by
\begin{subequations}\label{eq:kkt}
\begin{align}
\nabla_{x^i}\mathcal{L}^i &= 0, \quad \nabla_{u^i}\mathcal{L}^i = 0, \label{eq:kkt_stationarity}\\
\mathcal{F}^i(x^i,u^i) &= 0, \label{eq:kkt_dynamics}\\
h(x, u) + s&= 0, \quad s \ge 0,\label{eq:kkt_ineq}\\
\mu^\top h(x,u) &= 0, \quad \mu \ge 0, \label{eq:kkt_comp}
\end{align}
\end{subequations}
where $s$ denotes the slack variable for the inequality constraints. 
Let the stacked variable be $z = (z^1, \dots, z^N)$, where $z^i$ collects the primal variables (i.e., $x^i$ and $u^i$) and dual variables (i.e., $\lambda^i$ and $\mu$). 
Denote the inequality constraint residual $h(x,u)$ as $\mathcal{H}$. 
We define the KKT residual vector for agent $i$ as 
\begin{equation}
    \label{eqn:residue}
    \mathcal{R}^i = (\nabla_{x^i}\mathcal{L}^i, \nabla_{u^i}\mathcal{L}^i, \mathcal{F}^i, \mathcal{H}),
\end{equation}
and the joint residual as $\mathcal{R} = (\mathcal{R}^1, \dots, \mathcal{R}^N)$. 
The KKT condition of GNE is then given by
\[
\mathcal{R}^i(z^1, \dots, z^N) = 0, \qquad i = 1,\dots,N,
\]
or, collectively, as
\begin{equation}
    \label{eqn:master-root-equation}
    \mathcal{R}(z) =
    \left[\mathcal{R}^1(z), \ldots, \mathcal{R}^N(z) \right]^\top
    = 0.
\end{equation}

The interaction structure of the game is revealed by the Jacobian of the KKT operator,
\begin{equation}
    \label{eqn:KKT-matrix}
    \nabla \mathcal{R}(z) =
    \begin{bmatrix}
        \nabla_{z^1} \mathcal{R}^1 & \nabla_{z^2} \mathcal{R}^1 & \cdots & \nabla_{z^N} \mathcal{R}^1 \\
        \nabla_{z^1} \mathcal{R}^2 & \nabla_{z^2} \mathcal{R}^2 & \cdots & \nabla_{z^N} \mathcal{R}^2 \\
        \vdots & \vdots & \ddots & \vdots \\
        \nabla_{z^1} \mathcal{R}^N & \nabla_{z^2} \mathcal{R}^N & \cdots & \nabla_{z^N} \mathcal{R}^N
    \end{bmatrix}.
\end{equation}

The diagonal blocks $\nabla_{z^i} \mathcal{R}^i$ describe the agent $i$’s own influence on its performance, 
while the off-diagonal blocks $\nabla_{z^j} \mathcal{R}^i$ ($j \neq i$) quantify how agent $i$’s optimality conditions respond to variations in agent $j$’s decision. 
In this sense, $\nabla \mathcal{R}(z)$ encodes how each agent must adjust its strategy to improve performance given the actions of others.
%
% \edit{
% Additionally, $\nabla\mathcal{R}(z)$ is always square: its rows correspond to the primal Lagrangian residues and the constraints, while its columns correspond to the primal variables and the dual variables associated with those constraints.
% }
% \panos{$\nabla \mathcal{R}$ may not be square matrix.}

We decompose $\nabla \mathcal{R}(z)$ into its symmetric and skew-symmetric components
\begin{equation}
    \label{eqn:operator-splitting}
    \nabla \mathcal{R}(z) = S(z) + A(z),
\end{equation}
where,
\begin{align*}
S(z) \!&=\! \tfrac{1}{2}\big(\nabla \mathcal{R}(z) \!+\! \nabla \mathcal{R}(z)^{\!\top}\!\big),
A(z) \!=\! \tfrac{1}{2}\big(\nabla \mathcal{R}(z) \!-\! \nabla \mathcal{R}(z)^{\!\top} \!\big).
\end{align*}
This decomposition reveals a structural spectrum of multi-agent games.
The matrix $S(z)$ represents the \emph{aligned objectives} of the game, while matrix $A(z)$ captures the \emph{adversarial interaction}, as illustrated in the following two extreme cases of the games.

\paragraph{Potential Games}

Consider a canonical example, where there exists a scalar \emph{potential} function $\Phi(z)$ such that 
\begin{equation}
    \label{eqn:fully-potential}
    \mathcal{R}(z) = \nabla \Phi(z).
\end{equation}
Then, $\nabla \mathcal{R}(z) = \nabla^2 \Phi(z)$, which is symmetric by construction, and hence $A(z) = 0$.
In particular, $\nabla_{z^i}\mathcal{R}^j = \nabla_{z^j}\mathcal{R}^i = \nabla^2_{z^i z^j}\Phi = \nabla^2_{z^j z^i}\Phi$, which implies that the marginal impact of agent~$j$ on agent~$i$'s performance is identical to that of agent~$i$ on agent~$j$.
Moreover, solving the KKT condition for GNE, i.e., $\mathcal{R}(z) = 0$, can be viewed as seeking a stationary point of the potential function $\Phi(z)$.%
\footnote{It should be noted that our terminology is somewhat different from that used in the game theory literature, which refers to potential games as those imposing a global potential over agents' unilateral deviations: $\Phi(u^i, \!u^{-i}) \!-\! \Phi(u^{i'}, \!u^{-i}) \!= J^{i}(u^{i},\! u^{-i}) \!-\! J^{i}(u^{i'}, \!u^{-i})$. 
Here, we adopt the concept from a local-gradient perspective on the stacked primal-dual variables.}
Thus, in the purely potential case, the equilibrium computation reduces, at least locally, to solving the first-order optimality condition of a single scalar objective.

% \panos{Is this the definition of a potential game?}
% \sg{It's complicated. Our notion of ``potential game" is defined based on the stacked variables (both primal and dual), while the standard potential game is defined based on the ``decision variables / strategies" of each agent only. 
% %
% Also, we defined the potential game using local gradient, but the standard potential game is defined globally. 
% %
% Specifically, let $\sigma^i$ denote an agent's strategy, then, there exist a potential function $\Phi$ such that for all agent $i$ and strategies $\sigma^{i}$ and $\sigma^{i'}$
% \[
%     \Phi(\sigma^i, \sigma^{-i}) - \Phi(\sigma^{i'}, \sigma^{-i}) = J^{i}(\sigma^{i}, \sigma^{-i}) - J^{i}(\sigma^{i'}, \sigma^{-i})
% \]
% However, the relation~\eqref{eqn:fully-potential} resembles the idea of a potential game, where $\Phi$ serves as a potential function.
% }

\paragraph{Competitive Games}

As a canonical example, consider a constraint-free zero-sum game where the cost is $J(z) = z_1^\top G z_2$, 
such that $J^1(z) = J(z)$ and $J^2(z) = -J(z)$. 
Then, 
\[
\nabla \mathcal{R}(z) =
% \begin{bmatrix}
% \nabla_{z^1} \mathcal{R}^1 & \nabla_{z^2} \mathcal{R}^1 \\
% \nabla_{z^1} \mathcal{R}^2 & \nabla_{z^2} \mathcal{R}^2
% \end{bmatrix}
% =
\begin{bmatrix}
\nabla^2_{z^1 z^1} J & \nabla^2_{z^1 z^2} J \\
-\nabla^2_{z^1 z^2} J & -\nabla^2_{z^2 z^2} J
\end{bmatrix}
=
\begin{bmatrix}
0 & G \\
-G^\top & 0
\end{bmatrix}
.
\]
The symmetric component therefore vanishes, i.e., $S(z)=0$, implying that the interaction is purely adversarial.
The GNEP in~\eqref{eq:game} can then be written as a min-max optimization.

% For simplicity, we ignore constraints and consider a state-dependent reward satisfying $J(z) = J^1(z) = -J^2(z)$. 
% Then,
% \[
% \nabla \mathcal{R}(z) =
% \begin{bmatrix}
% \nabla_{z^1} \mathcal{R}^1 & \nabla_{z^2} \mathcal{R}^1 \\
% \nabla_{z^1} \mathcal{R}^2 & \nabla_{z^2} \mathcal{R}^2
% \end{bmatrix}
% =
% \begin{bmatrix}
% \nabla^2_{z^1 z^1} J & \nabla^2_{z^1 z^2} J \\
% -\nabla^2_{z^1 z^2} J & -\nabla^2_{z^2 z^2} J
% \end{bmatrix}.
% \]
% The symmetric component therefore vanishes, i.e., $S(z)=0$, implying that the interaction is purely adversarial.
% The GNEP in~\eqref{eq:game} can then be written as a min-max optimization. 

%%%%%%%%%%%%%%%%%%%%%%%%%%%%%%%%%%%%%%%%%%%%%%%%%
\section{Potential-Dominant Games}
%%%%%%%%%%%%%%%%%%%%%%%%%%%%%%%%%%%%%%%%%%%%%%%%%

Most real-world multi-agent systems lie between the extremes of purely potential games and fully competitive games.
% %
% For example, in warehouse robotics and traffic coordination, agents often share broad system-level objectives, such as improving throughput or maintaining safety.
% %
% At the same time, their local objectives may not be fully aligned; for instance, a merging vehicle may prefer to enter ahead of another vehicle.
% %
% Nevertheless, the cooperative objective typically dominates the optimization landscape, with competitive effects acting as perturbations.
%
This motivates the following definition of $\gamma$-Potential-Dominant Games ($\gamma$-PDGs).

\begin{definition}[$\gamma$-Potential-Dominant Games]
For a given $\gamma \in (0,1]$, a game is said to be $\gamma$-potential-dominant at $z$ if $S(z)$ is nonsingular and
\[
\rho\left(S(z)^{-1}A(z)\right)<\gamma,
\]
where $\rho(\cdot)$ denotes the spectral radius.
\end{definition}

The above definition of $\gamma$-potential-dominant games formalizes the idea that potential structure dominates adversarial interactions.
When $\rho(S^{-1}A) = 0$, the game reduces to a strict potential game.
As $\rho(S^{-1}A)$ increases, the influence of non-cooperative interactions strengthens, and the game progressively departs from aligned interests toward strongly competitive behaviors.

Although $\gamma$-potential-dominant games share a similar motivation with near-potential games~\cite{candogan2011flows} and Markov $\alpha$-potential games~\cite{guo2025markov, kalaria2024alpha}, our formulation differs in two important aspects.
\textit{First}, $\alpha$-potential games measure deviation from a potential game through a global scalar approximation error associated with a surrogate potential function, whereas $\gamma$-potential dominance is characterized locally through the spectral properties of the game KKT matrix.
\textit{Second}, existing $\alpha$-potential formulations typically optimize the surrogate potential and therefore recover an approximate equilibrium of the original game.
In contrast, our method explicitly retains the competitive component and incorporates it through the iterative refinement procedure introduced later, thereby converging to the exact local GNE under the potential-dominance condition.

\subsection{Verification of Potential Dominance}

We first establish two equivalent characterizations of potential dominance in terms of a Lyapunov inequality and operator norm.
The proof is presented in the appendix.

\begin{theorem}
    \label{thm:equivalent-lyapunov}
    The following conditions are equivalent:
    \begin{enumerate}
        \item[(i)] The game is $\gamma$-potential-dominant at $z$.
        \item[(ii)] There exists a positive-definite matrix $P$ such that 
            \begin{equation}
                \label{eqn:equiv-potential-dom-cond}
                A(z)^\top P A(z) - \gamma^2 S(z)^\top P S(z) \prec 0.
            \end{equation}
        \item[(iii)] There exists a positive-definite matrix $Q \succ 0$ such that 
            \( \norm{Q}{S^{-1}(z) A(z)} < \gamma,\)
        where the operator norm is induced by the vector norm $\|x \|_Q^2 = x^\top Q x.$
    \end{enumerate}

\end{theorem}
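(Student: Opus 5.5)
Throughout, write $M = S(z)^{-1}A(z)$; nonsingularity of $S(z)$ is part of the definition and is assumed in (ii) and (iii) as well. The plan is to reduce all three conditions to standard facts about the matrix $M$.

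\textbf{(i)$\Leftrightarrow$(ii).} Condition (i) says $\rho(M)<\gamma$, equivalently $\rho(M/\gamma)<1$, i.e.\ $M/\gamma$ is Schur stable. The discrete Lyapunov theorem then gives $P\succ0$ with
\[
M^\top P M - \gamma^2 P \prec 0 .
\]
(Indeed one may take $P=\sum_{k\ge0}\gamma^{-2k}(M^\top)^kM^k$, which converges when $\rho(M)<\gamma$.) This inequality is not yet \eqref{eqn:equiv-potential-dom-cond}. To get there, apply the congruence $X\mapsto S^\top X S$, which preserves definiteness because $S$ is nonsingular. Since $SM=A$, this yields
\[
A^\top P A - \gamma^2 S^\top P S \prec 0 ,
\]
with the same $P$, which is (ii). For the converse, start from (ii), apply the congruence by $S^{-1}$, and recover $M^\top P M-\gamma^2P\prec0$. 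Now take an eigenpair $Mv=\lambda v$ with $v\neq0$ (complex $v$, using $v^*$). Then
\[
(|\lambda|^2-\gamma^2)\,v^*Pv<0,
\]
and since $v^*Pv>0$ this forces $|\lambda|<\gamma$. Hence $\rho(M)<\gamma$, which is (i).

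\textbf{(ii)$\Leftrightarrow$(iii).} Take $Q=P$. The induced norm satisfies
\[
\|M\|_Q^2=\max_{x\neq0}\frac{x^\top M^\top Q M x}{x^\top Q x}.
\]
So $\|M\|_Q<\gamma$ holds iff $x^\top (M^\top QM-\gamma^2Q)x<0$ for all $x\neq0$. This uses strictness of the maximum on the compact unit sphere. The last condition is exactly $M^\top QM-\gamma^2Q\prec0$, which by the congruence above is equivalent to (ii). Alternatively, (iii)$\Rightarrow$(i) follows directly from $\rho(M)\le\|M\|_Q$.

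\textbf{Main obstacle.} I expect the delicate point to be bookkeeping rather than depth. First, $S$ is symmetric, so $S^\top=S$, and the congruence must be applied on the correct side. Second, if $\nabla\mathcal{R}$ is real but $M$ has complex eigenvalues, the eigenvector argument needs Hermitian forms. This is fine because $P$ is real symmetric, so $v^*Pv>0$ still holds for complex $v\neq0$.
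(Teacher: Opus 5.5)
Your reduction to $M=S^{-1}A$ is the right frame, and the parts that involve only $M$ are correct. The discrete Lyapunov characterization of $\rho(M)<\gamma$ (series construction plus the Hermitian eigenvector argument), together with $\|M\|_Q<\gamma\iff M^\top QM-\gamma^2Q\prec0$, already proves (i)$\Leftrightarrow$(iii). The gap is every link to (ii). Each one rests on the claim that the congruence $X\mapsto S^\top XS$ maps $M^\top PM-\gamma^2P$ to $A^\top PA-\gamma^2S^\top PS$ ``with the same $P$.'' It does not: $S^\top M^\top PMS=(MS)^\top P(MS)$, and $MS=S^{-1}AS\neq A$. Using $SM=A$ would require $S$ to sit between $M$ and $P$, not outside them.

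The claim is false in general. Take $S=\mathrm{diag}(1,100)$, $A=\begin{pmatrix}0&1\\-1&0\end{pmatrix}$ and $\gamma=\tfrac12$, so $M=\begin{pmatrix}0&1\\-0.01&0\end{pmatrix}$ and $\rho(M)=0.1$. For $P=\mathrm{diag}(1,2000)$ you get $M^\top PM-\tfrac14P=\mathrm{diag}(-0.05,-499)\prec0$. Yet $A^\top PA-\tfrac14S^\top PS=\mathrm{diag}(1999.75,\,1-5\cdot10^6)$ is indefinite. In the other direction, $P=\mathrm{diag}(1,0.2)$ satisfies (ii) but $\|M\|_P=\sqrt5>\gamma$, so ``take $Q=P$'' fails as well. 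Your converse has the mirror-image slip: congruence of (ii) by $S^{-1}$ gives a Lyapunov inequality for $AS^{-1}$, not for $M$. The conclusion $\rho(M)<\gamma$ survives only because $AS^{-1}=SMS^{-1}$ is similar to $M$.

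The missing step is the change of variables the paper uses. Since $A=SM$,
\[
A^\top PA-\gamma^2S^\top PS=M^\top(S^\top PS)M-\gamma^2\,S^\top PS,
\]
so $P$ satisfies (ii) if and only if $\tilde P=S^\top PS$ satisfies $M^\top\tilde PM-\gamma^2\tilde P\prec0$. Hence in (i)$\Rightarrow$(ii) you must take $P=S^{-\top}\tilde PS^{-1}$, where $\tilde P$ is the Lyapunov matrix of $M$. In the example, $S^{-1}\mathrm{diag}(1,2000)S^{-1}=\mathrm{diag}(1,0.2)$. In (ii)$\Leftrightarrow$(iii) the correspondence is $Q=S^\top PS$, not $Q=P$, as in the paper's appendix. (The paper's displayed intermediate writes $S^\top\tilde PS$ where $S^\top PS$ is meant.)

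Alternatively, your congruence idea does work if you start from the dual inequality $M\tilde PM^\top-\gamma^2\tilde P\prec0$, which also characterizes $\rho(M)<\gamma$. Then $S(M\tilde PM^\top)S=A\tilde PA^\top=A^\top\tilde PA$ because $S^\top=S$ and $A^\top=-A$, so the same $\tilde P$ satisfies (ii). Congruence by $S^{-1}$ reverses this, since $AS^{-1}=-M^\top$.
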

\vspace{+10pt}

Although Theorem~\ref{thm:equivalent-lyapunov} provides an exact characterization, verifying it requires solving a semidefinite program, which may be prohibitively expensive.
To obtain a tractable sufficient condition, we sacrifice necessity by fixing $P=I$, thereby restricting the analysis to the standard Euclidean metric and leading to the following corollary.

\begin{corollary}
    \label{cor:cholesky-test}
    If $\norm{2}{S^{-1}(z)A(z)} = \norm{I}{S^{-1}(z)A(z)} < 1$, then the game is $\gamma$-potential-dominant at $z$ for any $\gamma \in \left( \Vert{S^{-1}(z)A(z)}\Vert_2, 1 \right]$.
\end{corollary}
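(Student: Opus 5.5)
The plan is to argue directly from the definition of $\gamma$-potential dominance, using the elementary fact that the spectral radius is bounded by any induced operator norm. Theorem~\ref{thm:equivalent-lyapunov}(iii) with $Q=I$ serves as a consistency check on the same argument.

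\textbf{Nonsingularity of $S(z)$.} The definition requires $S(z)$ to be invertible. The hypothesis is stated in terms of $S^{-1}(z)$, so I read it as presupposing this invertibility. This is the only point needing care. Everything else is a one-line norm bound, so I do not expect a genuine obstacle.

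\textbf{Spectral radius bound.} Fix $M = S^{-1}(z)A(z)$ and let $\lambda$ be any eigenvalue of $M$, with (possibly complex) eigenvector $v \neq 0$. Then
\[
|\lambda|\,\|v\|_2 = \|Mv\|_2 \le \|M\|_2 \|v\|_2,
\]
which gives $|\lambda| \le \|M\|_2$. Taking the maximum over all eigenvalues yields $\rho(M) \le \|M\|_2$. The identification $\|M\|_2 = \|M\|_I$ is immediate, because the $Q$-norm with $Q=I$ is exactly the Euclidean norm.

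\textbf{Choice of $\gamma$.} Now take any $\gamma \in \left(\|M\|_2, 1\right]$. This interval is nonempty precisely because $\|M\|_2 < 1$, and it lies inside $(0,1]$ as the definition requires. Then
\[
\rho(M) \le \|M\|_2 < \gamma,
\]
which is exactly the defining inequality of $\gamma$-potential dominance at $z$.

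\textbf{Route through the theorem.} Alternatively, condition (iii) of Theorem~\ref{thm:equivalent-lyapunov} holds with $Q = I \succ 0$, since $\|M\|_I < \gamma$. The equivalence (iii)$\Rightarrow$(i) of that theorem then gives the claim. Equivalently, one can check (ii) with $P=I$. Multiplying $\|S^{-1}Ax\|_2 < \gamma\|x\|_2$ through by $S$, i.e.\ substituting $x = Sy$, gives $\|Ay\|_2^2 < \gamma^2\|Sy\|_2^2$ for all $y \neq 0$. This is $A^\top A - \gamma^2 S^\top S \prec 0$, where invertibility of $S$ is used to make the substitution range over all nonzero vectors.
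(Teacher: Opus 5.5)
Your main argument is correct, and it is more direct than the paper's. The paper gives no separate proof: it obtains the corollary by restricting Theorem~\ref{thm:equivalent-lyapunov} to the Euclidean metric, i.e.\ condition (iii) with $Q=I$. That is essentially your ``route through the theorem,'' and in the paper the implication (iii)$\Rightarrow$(i) passes through the Lyapunov characterization (ii). Your bound $\rho(M)\le\|M\|_2<\gamma$ reaches the definition in one line with no LMI or Lyapunov machinery. What the paper's framing gives instead is the view of the test as the $Q=I$ member of an exact family of certificates, which shows why it is only sufficient.

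One step in your last paragraph is wrong as written, though nothing depends on it. Substituting $x=Sy$ into $\|S^{-1}Ax\|_2<\gamma\|x\|_2$ gives $\|S^{-1}ASy\|_2<\gamma\|Sy\|_2$, not $\|Ay\|_2<\gamma\|Sy\|_2$; a norm inequality cannot be ``multiplied through'' by $S$. In fact $A^\top A-\gamma^2S^\top S\prec 0$ is equivalent, via $w=Sy$, to $\|AS^{-1}\|_2<\gamma$. This is condition (iii) with $Q=S^\top S$, not $Q=I$, consistent with the map $Q=S^\top P S$ in the appendix.

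The conclusion does still hold here, but only because of the structure of the splitting: $(AS^{-1})^\top=S^{-1}A^\top=-S^{-1}A$, by symmetry of $S$ and skew-symmetry of $A$. Since the spectral norm is invariant under transposition, this gives $\|AS^{-1}\|_2=\|S^{-1}A\|_2$. For general non-symmetric/non-skew factors the two norms can differ arbitrarily. This identity is also what reconciles the paper's phrase ``fixing $P=I$'' with the $\|\cdot\|_I$ in the statement. Either replace your justification with this argument or drop the aside.
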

% \vspace{+10pt}

Corollary~\ref{cor:cholesky-test} reduces verification of potential dominance to estimating the largest singular value of $S(z)^{-1}A(z)$.
This quantity can be computed using Krylov-subspace methods that require only matrix-vector products.
In particular, multiplications involving $S(z)^{-1}$ need not be formed explicitly and can instead be evaluated through forward and backward substitutions using a precomputed \ldlt{} factorization of $S(z)$.
The resulting test, however, is conservative: failure of condition in Corollary~\ref{cor:cholesky-test} does not preclude the game from being $\gamma$-potential-dominant under a different matrix $P\neq I$.

% \subsection{An Interpretation of Potential Dominance}

% Corollary~\ref{cor:cholesky-test} provides an intuitive interpretation of potential dominance.
% %
% For simplicity, consider an $N$-agent, single-stage, unconstrained game with scalar actions $u^i \! \in\!  \mathbb{R}$, where agent $i$ minimizes $J^i(u^1,\!\dots,\!u^N)$.
% %
% In this setting,
% $[\nabla \mathcal{R}(z)]_{ij} = \nabla^2_{u^i u^j} J^i$.

% We define the \emph{pairwise adversarial coupling} between agents $i$ and $j$ as
% $$C_{ij}  = \frac{1}{2} \left| \nabla_{u^i u^j}^2 (J^i - J^j) \right| = |A_{ij}|,$$
% and the corresponding \emph{aggregate adversarial influence} on agent $i$ as $c_i=\sum_{j\neq i}c_{ij}$.

% Conversely, we define the \emph{cooperative stability margin} of agent $i$ as the diagonal-dominance margin of $S$:
% $$D_i  = \nabla_{u^i u^i}^2 J^i - \sum_{j \neq i} \frac{1}{2} \left| \nabla_{u^i u^j}^2 (J^i + J^j) \right| = S_{ii} - \sum_{j \neq i} |S_{ij}|$$

% This yields the following sufficient condition.

% \begin{corollary}
% \label{cor:CD-suff-con}
% A game is $\gamma$-potential-dominant if
% $$\max_i C_i \;\leq\; \gamma \min_i D_i.$$
% \end{corollary}

% The result follows directly from Corollary~\ref{cor:cholesky-test}, using the following inequality
% \[
%     \|S^{-1} A\|_2 \leq \|S^{-1}\|_2 \|A\|_2 \leq \frac{\max_i C_i}{\min_i D_i}.
% \]
% Thus, potential dominance admits a natural interpretation: the weakest cooperative stability margin must outweigh the strongest aggregate adversarial influence.

%%%%%%%%%%%%%%%%%%%%%%%%%%%%%%%%%%%%%%%%%%%%%%%%%
\section{\texorpdfstring{A Fast Solver for $\gamma$-PDGs via Symmetric Splitting and Iterative Refinement}{A Fast Solver for gamma-PDGs via Symmetric Splitting and Iterative Refinement}}
%%%%%%%%%%%%%%%%%%%%%%%%%%%%%%%%%%%%%%%%%%%%%%%%%

For general $N$-agent games, existing numerical approaches~\cite{algames, rd3g} typically compute a GNE by applying gradient-based root-finding methods to the KKT system~\eqref{eqn:master-root-equation}.
For instance, a standard Newton-type update rule takes the form
\[
    z_{k+1} = z_k + \Delta z^*_k,
\]
where the descent step $\Delta z^*_k$ is obtained by solving 
\begin{equation}
    \label{eqn:root}
    \nabla \mathcal{R}(z_k)\, \Delta z^*_k = -\mathcal{R}(z_k).
\end{equation}
    
\subsection{Symmetric Splitting}
In multi-agent settings, however, the Jacobian $\nabla \mathcal{R}(z_k)$ can be both large and ill-conditioned, particularly as the number of agents increases or the coupling constraints become more complex.
Consequently, repeatedly solving this linear system can be computationally expensive and slow, limiting the applicability of such methods for real-time MPC.

For potential-dominant games, we can exploit the operator splitting in~\eqref{eqn:operator-splitting}, namely,
$\nabla \mathcal{R}(z_k)=S(z_k)+A(z_k)$,
where the symmetric $S(z_k)$ captures aligned interactions and the skew-symmetric $A(z_k)$ captures adversarial coupling.
In the remainder of this section, we write $S$ and $A$ for $S(z_k)$ and $A(z_k)$ whenever the evaluation point is clear.

\subsection{Preconditioning and Iterative Refinement}
Rather than directly solving
$
(S+A)\Delta z
=
-\mathcal{R}(z_k),
$
we use the potential component $S$ to precondition the system.
Left-multiplying by $S^{-1}$ gives
\[
\left(I+S^{-1}A\right)\Delta z
=
-S^{-1}\mathcal{R}(z_k).
\]
This transformation isolates the identity term and treats the competitive component $S^{-1}A$ as a perturbation.
In implementation, we do not explicitly form $S^{-1}$.
Instead, we apply the following \emph{iterative refinement scheme} to find $\Delta z_k^*$:
\begin{equation}
\label{eq:iter}
S(z_k) \Delta z_k^{(j+1)}
=
-\mathcal{R}(z_k)-A(z_k) \Delta z_k^{(j)}.
\end{equation}
Thus, each refinement step requires only a linear solve with $S$, rather than with the full Jacobian $S+A$.

Given the current evaluation point $z_k$, since $S(z_k)$ is symmetric, \eqref{eq:iter} can be solved efficiently using an \ldlt{} factorization.
Because $S(z_k)$ remains fixed throughout the iterative refinement (indexed by $j$), its factorization is computed only once and reused at every refinement step, which then requires only inexpensive triangular solves~\cite{trefethen2022numerical}.
Moreover, when the sparsity pattern of $S$ remains unchanged across MPC iterations (indexed by $k$), symbolic factorization can also be reused, leaving only the numerical values to be updated, thus speeding up computations significantly.

\begin{algorithm}[t]
\caption{Potential Dominant Operator Splitting}
\label{alg:algorithm}
\begin{algorithmic}
\REQUIRE Initial guess $z$, Residue tol $\tau$, Refinement tol $\varepsilon$
\REPEAT
    \STATE \textbf{Compute:} $\mathcal{R}(z) \text{~and~} \nabla \mathcal{R}(z)$ per~\eqref{eqn:residue} and \eqref{eqn:KKT-matrix}
    \STATE \textbf{Decomposition} $S = \frac12(\nabla \mathcal{R} + (\nabla \mathcal{R})^\top), A = \nabla \mathcal{R} - S$
    \STATE \textbf{Factorize:}  $S = L D L^T$
    \STATE \textbf{Initialize refinement:} $\Delta z^{(0)} \leftarrow 0$
    \STATE \textbf{Iterative Refinement:}
    \REPEAT
        \STATE Solve $S \Delta z^{(j+1)} = - \mathcal{R} - A \Delta z^{(j)}$ using \ldlt{}
    \UNTIL $\Vert \Delta z^{(j+1)} - \Delta z^{(j)}\Vert  < \frac{1-\gamma}{\gamma}\varepsilon$
    \STATE \textbf{Update:} $z \leftarrow z + \Delta z^{(j+1)}$
\UNTIL{$\vert \mathcal{R}(z)\vert < \tau$}
\RETURN $z$
\end{algorithmic}
\end{algorithm}

\subsection{Convergence and Algorithm Complexity}

Next, we show that the $\gamma$-potential-dominance ensures that the refinement scheme in~\eqref{eq:iter} is contractive, thereby guaranteeing convergence to the exact Newton step. 

\begin{theorem}
Suppose the game is $\gamma$-potential-dominant at $z_k$.
Then, the sequence ${\Delta z_k^{(j)}}$ generated by~\eqref{eq:iter}
converges \emph{linearly} to the solution $\Delta z^*_k$ of~\eqref{eqn:root}.
In particular, letting $e^{(j)}=\Delta z_k^{(j)}-\Delta z^*_k$.
There exists a positive-definite matrix $P$ such that
\[
\Vert e^{(j)} \Vert_P
\leq
\gamma^j \Vert e^{(0)}\Vert_P,
\]
where $\norm{P}{x}^2 = x^\top P x$.
Additionally, the iteration complexity is given by $\mathcal{O} \Big(\frac{\log (1/\varepsilon)}{\log(1/ \gamma)} \Big)$.
\end{theorem}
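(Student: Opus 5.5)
The plan is to write the refinement scheme \eqref{eq:iter} as a fixed-point iteration and read off the contraction directly from Theorem~\ref{thm:equivalent-lyapunov}.

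\textbf{Step 1 (error recursion).} Since the game is $\gamma$-potential-dominant at $z_k$, $S=S(z_k)$ is nonsingular, so \eqref{eq:iter} is well defined. Multiplying \eqref{eq:iter} by $S^{-1}$ gives
\[
\Delta z_k^{(j+1)} = -S^{-1}\mathcal{R}(z_k) - S^{-1}A\,\Delta z_k^{(j)} .
\]

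\textbf{Step 2 (the Newton step is the fixed point).} First I would check that $S+A=\nabla\mathcal{R}(z_k)$ is invertible. If $(S+A)v=0$, then $S^{-1}Av=-v$, so $-1$ is an eigenvalue of $S^{-1}A$. This contradicts $\rho(S^{-1}A)<\gamma\le 1$, so $\Delta z^*_k$ exists and is unique. Moreover, $(S+A)\Delta z^*_k=-\mathcal{R}(z_k)$ rearranges to $S\Delta z^*_k=-\mathcal{R}(z_k)-A\Delta z^*_k$. Subtracting this from \eqref{eq:iter} yields
\[
e^{(j+1)}=-S^{-1}A\,e^{(j)}=:M e^{(j)}, \qquad e^{(j)}=(-1)^j M_0^j e^{(0)}, \quad M_0=S^{-1}A .
\]

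\textbf{Step 3 (contraction in a weighted norm).} By Theorem~\ref{thm:equivalent-lyapunov}(iii), there is $Q\succ0$ with $\|S^{-1}A\|_Q<\gamma$. Set $P=Q$. Then $\|e^{(j+1)}\|_P\le \|S^{-1}A\|_P\|e^{(j)}\|_P\le\gamma\|e^{(j)}\|_P$, and induction gives $\|e^{(j)}\|_P\le\gamma^j\|e^{(0)}\|_P$. This is linear convergence, and by equivalence of norms it also holds in the Euclidean norm up to the constant $\sqrt{\kappa(P)}$.

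\textbf{Step 4 (complexity).} To get $\|e^{(j)}\|_P\le\varepsilon$, it suffices that $\gamma^j\|e^{(0)}\|_P\le\varepsilon$, i.e. $j\ge \log(\|e^{(0)}\|_P/\varepsilon)/\log(1/\gamma)$. Treating $\|e^{(0)}\|_P=\|\Delta z^*_k\|_P$ as a constant (recall $\Delta z^{(0)}=0$), this gives $\mathcal{O}\big(\log(1/\varepsilon)/\log(1/\gamma)\big)$ iterations.

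To tie this to the stopping rule in Algorithm~\ref{alg:algorithm}, I would use the standard a posteriori bound
\[
\|e^{(j+1)}\|_P\le \frac{\gamma}{1-\gamma}\,\|\Delta z^{(j+1)}-\Delta z^{(j)}\|_P ,
\]
which follows from the triangle inequality applied to the geometric tail $\sum_{\ell\ge1}\gamma^\ell\|\Delta z^{(j+1)}-\Delta z^{(j)}\|_P$. Under the algorithm's threshold $\frac{1-\gamma}{\gamma}\varepsilon$, this bound gives $\|e^{(j+1)}\|\le\varepsilon$. The algorithm measures the difference in an unspecified norm, so I would either read it as the $P$-norm or absorb the norm-equivalence constant into the $\mathcal{O}(\cdot)$.

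\textbf{Main obstacle.} The only nontrivial step is obtaining a \emph{uniform} one-step contraction factor strictly less than $\gamma$ from a spectral-radius hypothesis. The spectral radius alone gives only asymptotic decay, with possible transient growth when $S^{-1}A$ is non-normal. This is exactly what the equivalence (i)$\Leftrightarrow$(iii) of Theorem~\ref{thm:equivalent-lyapunov} supplies, so that theorem is invoked rather than re-derived.

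If I had to justify it directly, I would use the Stein equation. Since $\rho(S^{-1}A/\gamma)<1$, the matrix $P=\sum_{\ell\ge0}\gamma^{-2\ell}(M_0^\top)^\ell M_0^\ell$ converges and is positive definite. It satisfies $M_0^\top P M_0-\gamma^2P=-\gamma^2 I\prec0$, which is precisely $\|M_0\|_P<\gamma$.
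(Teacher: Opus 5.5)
Your proposal is correct and follows essentially the same route as the paper: subtracting the Newton equation gives $e^{(j+1)}=-S^{-1}A\,e^{(j)}$, Theorem~\ref{thm:equivalent-lyapunov}(iii) supplies a weighted norm with $\|S^{-1}A\|_P<\gamma$, and the geometric decay and the $\log(\|e^{(0)}\|_P/\varepsilon)/\log(1/\gamma)$ iteration count then follow directly. Your extra check that $\nabla\mathcal{R}(z_k)=S+A$ is nonsingular (because $-1$ cannot be an eigenvalue of $S^{-1}A$ when $\rho(S^{-1}A)<\gamma\le 1$) establishes the existence of $\Delta z_k^*$, which the paper takes for granted, and your a posteriori bound is exactly the paper's subsequent corollary.
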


\begin{IEEEproof}
    Subtracting $S \Delta z_{k}^* = - \mathcal{R}(z_k) - A \Delta z^*_k$ from~\eqref{eq:iter} yields
    \(
        S(\Delta z^{(j+1)}_k - \Delta z_k^*) = -A(\Delta z_k^{(j)} - \Delta z^*_k).
    \)
    Hence, the error evolves according to
    \[
        e^{(j+1)} = -S^{-1}A\, e^{(j)}.
    \]
    Since the game is $\gamma$-potential-dominant, Theorem~\ref{thm:equivalent-lyapunov} guarantees the existence of a positive-definite matrix $P$ such that 
    such that the matrix norm satisfies $\norm{P}{S^{-1}A} < \gamma$.
    The desired bound and complexity follow directly:
    \[
        \norm{P}{e^{(j+1)}}
        \leq
        \norm{P}{-S^{-1}A} \norm{P}{e^{(j)}}
        <
        \gamma\norm{P}{e^{(j)}}.
    \]

    Therefore, to achieve a desired error $\norm{P}{e^{(j)}}\leq\varepsilon$, it suffices to take
    $j \geq \frac{\log( \varepsilon /\Vert e^{(0)}\Vert_P)}{\log(\gamma)} = \frac{\log(\Vert e^{(0)}\Vert_P /  \varepsilon)}{\log(1/\gamma)}$.
    Thus, the iterative refinement scheme converges linearly, with iteration complexity $\mathcal{O} \Big(\frac{\log (1/\varepsilon)}{\log(1/ \gamma)} \Big)$. 
\end{IEEEproof}

The above theorem also provides an a posteriori bound on the distance between the final iterate and the exact solution when the refinement is terminated early.

\begin{corollary}
Suppose the iteration in~\eqref{eq:iter} is terminated at step $j$ and satisfies 
\(
\norm{P}{\Delta z_k^{(j)}-\Delta z_k^{(j-1)}} 
\leq \varepsilon.
\)
Then the final iterate satisfies
\[
\norm{P}{\Delta z_k^{(j)}-\Delta z_k^*}
\leq \frac{\gamma}{1-\gamma} \varepsilon.
\]
\end{corollary}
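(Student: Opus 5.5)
The plan is the standard a posteriori estimate for a contraction, carried out in the $P$-norm supplied by the preceding theorem. Write $M = -S^{-1}A$ and $e^{(j)} = \Delta z_k^{(j)} - \Delta z_k^*$. The previous proof showed that $e^{(j)} = M e^{(j-1)}$. By Theorem~\ref{thm:equivalent-lyapunov}(iii), applied with the $P$ from the preceding theorem, we have $\norm{P}{M} < \gamma$. The key identity I will use is that the increment $\Delta z_k^{(j)} - \Delta z_k^{(j-1)} = e^{(j)} - e^{(j-1)}$ controls $e^{(j-1)}$, and hence also $e^{(j)}$.

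First I bound the previous error by the triangle inequality:
\[
\norm{P}{e^{(j-1)}} \le \norm{P}{e^{(j-1)} - e^{(j)}} + \norm{P}{e^{(j)}} \le \varepsilon + \gamma \norm{P}{e^{(j-1)}}.
\]
Rearranging gives $\norm{P}{e^{(j-1)}} \le \varepsilon/(1-\gamma)$. For this step I need $\gamma<1$ strictly; if $\gamma=1$ the claimed bound is vacuous (infinite), so that case is trivial. Next I apply the contraction once more:
\[
\norm{P}{e^{(j)}} \le \gamma \norm{P}{e^{(j-1)}} \le \frac{\gamma}{1-\gamma}\,\varepsilon.
\]
This is exactly the claim.

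An equivalent route, which I may present instead, is the telescoping argument. Write $e^{(j)} = -\sum_{\ell\ge 0} \big(\Delta z_k^{(j+\ell+1)} - \Delta z_k^{(j+\ell)}\big)$, which is valid because $e^{(m)}\to 0$ by the previous theorem. Then note that the increments themselves satisfy $\Delta z_k^{(m+1)} - \Delta z_k^{(m)} = M\big(\Delta z_k^{(m)} - \Delta z_k^{(m-1)}\big)$. Summing the geometric series $\sum_{\ell\ge1}\gamma^\ell\varepsilon$ gives the same bound.

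The only delicate point is making sure the same matrix $P$ is used in the stopping criterion and in the contraction estimate. The corollary's hypothesis is stated in $\norm{P}{\cdot}$ for the $P$ delivered by Theorem~\ref{thm:equivalent-lyapunov}, so this is consistent. I will remark that Algorithm~\ref{alg:algorithm}'s Euclidean test corresponds to $P=I$ when Corollary~\ref{cor:cholesky-test} applies.
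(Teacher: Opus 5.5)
Your proof is correct and is essentially the argument the paper intends. The paper gives no separate proof; it presents the corollary as a direct consequence of the contraction $\norm{P}{e^{(j+1)}} \le \gamma \norm{P}{e^{(j)}}$ from the preceding theorem, and your step $\norm{P}{e^{(j-1)}} \le \varepsilon + \gamma\norm{P}{e^{(j-1)}}$ followed by one more contraction (or equivalently the geometric sum of increments) is exactly that standard a posteriori estimate. You are also right to take $P$ as the matrix with $\norm{P}{S^{-1}A}<\gamma$, i.e.\ the $Q$ of Theorem~\ref{thm:equivalent-lyapunov}(iii), which the paper relabels $P$.
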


\section{Numerical Simulations}
%=============================================%

We evaluate the proposed approach on six multi-agent dynamic-game scenarios.
We provide detailed numerical analyses and discussions for two representative problems, highway merging and autonomous racing, while illustrating agents' behaviors in the other four scenarios through figures.
Across these experiments, we compare the runtime of the proposed method against a baseline solver that directly solves the generalized Nash equilibrium without symmetric splitting and iterative refinement.

\subsection{Implementation Details}
The dynamics, cost functions, and constraints are defined in Python using CasADi~\cite{casadi}.
The core algorithm is implemented in C++ to maximize execution speed, with algorithmic differentiation for the game KKT provided by CasADi.
All simulations employ a prediction horizon of $T=20$ steps.
The tolerance for the KKT residual is set to $\tau = 5 \times 10^{-4}$ for all trials.
Benchmarking experiments were executed on a single thread of a desktop computer equipped with an Intel i7-7700k processor with 32\,GB of memory.

\subsection{Highway Merging Scenario}

\begin{figure}[!b]
  \centering
  \begin{subfigure}{\linewidth}
  \centering
  \includegraphics[clip, trim=300 52 280 52,angle=-90,width=0.95\linewidth]{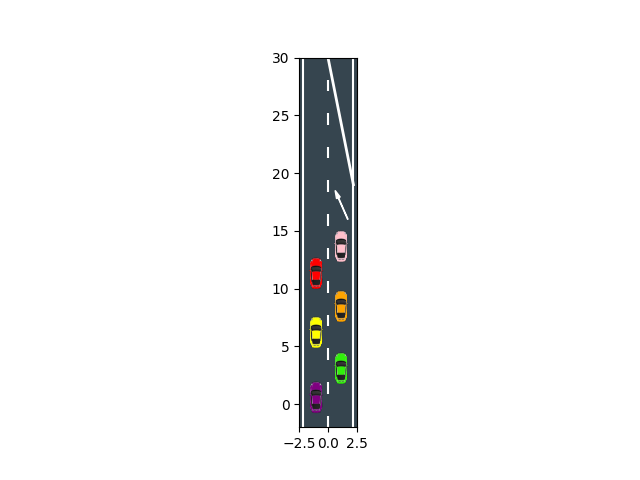}
    \vspace{-5pt}
    \caption{Initial State ($k=0$)}
  \end{subfigure}
  \begin{subfigure}{\linewidth}
  \centering
    \vspace{5pt}
  \includegraphics[clip, trim=300 52 280 52,angle=-90,width=0.95\linewidth]{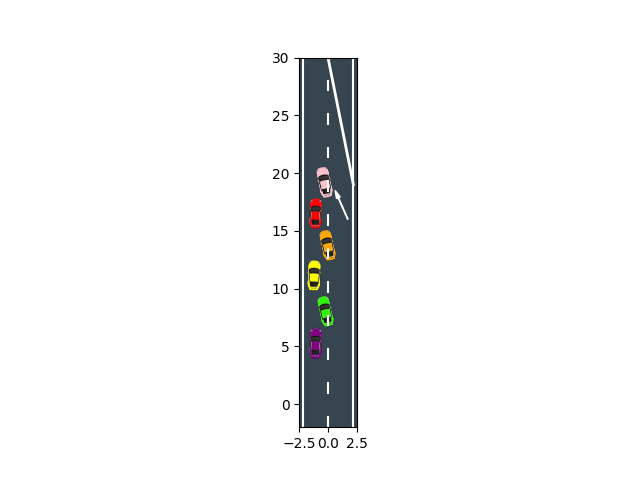}
  \vspace{-5pt}
    \caption{Intermediate State ($k=10$)}
  \end{subfigure}
  \begin{subfigure}{\linewidth}
  \centering
  \vspace{5pt}
  \includegraphics[clip, trim=300 52 280 52,angle=-90,width=0.95\linewidth]{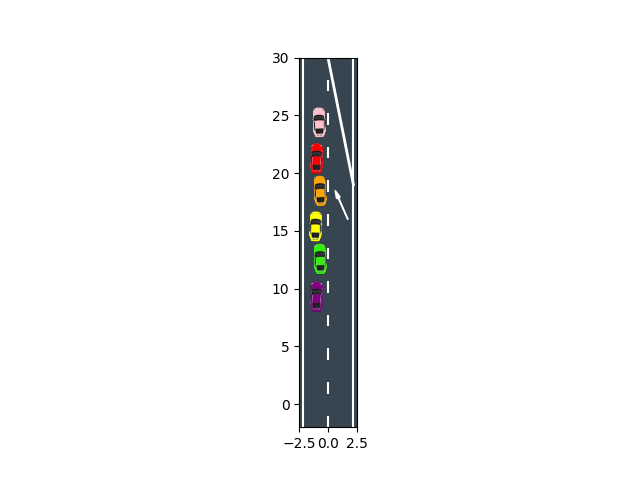}
  \vspace{-5pt}
    \caption{Final State ($k=20$)}
  \end{subfigure}
  \vspace{-20pt}
  \caption{Time-lapse of the 6-car merging game. 
  Agents successfully merge while maintaining safety distances.
  }
\label{fig:merge}   
\end{figure}

The highway merging scenario serves as a standard benchmark for evaluating mixed cooperative and competitive multi-agent behaviors.
Vehicles are initialized with randomized positions, headings, and speeds across two adjacent lanes.
The objective is to merge into the left lane without collisions.

Vehicle dynamics are governed by a standard kinematic bicycle model~\cite{rajamani2006vehicle}, with the states defined as $[x, y, v, \phi]^\top$, where $ x$ and $ y$ denote the longitudinal and lateral positions, $ v$ the speed, and $ \phi$ the heading.
The stage cost is formulated as a quadratic function penalizing deviations from the target lane, reference speed, heading deviation, and control effort.
Additionally, a small adversarial reward is added to incentivize players to maintain a lead relative to others, rendering the problem a general-sum game.
Collision avoidance is formulated using non-convex inequality constraints that enforce a minimum distance between any two vehicles.

Figure~\ref{fig:merge} shows a solution obtained using PDO-Split in the six-vehicle merging scenario.
Note that the initial gaps between cars are insufficient for a safe merge if all vehicles maintain their current speeds.
The agents therefore coordinate their speeds and lane-change timing to complete the maneuver successfully within the limited merging distance.
The resulting trajectories operate close to the collision constraints, highlighting both the difficulty of the scenario and the solver's ability to identify an efficient yet feasible solution.
Finally, over 100 random initial conditions, the average value of $\gamma$ is 4.3E-8, indicating that inter-agent interactions in the highway-merging scenario are strongly dominated by the potential component.

\subsection{Car Racing Scenario}

\begin{figure}[b!]
\centering
% upper, left, down right
\includegraphics[angle=0,width=\columnwidth]{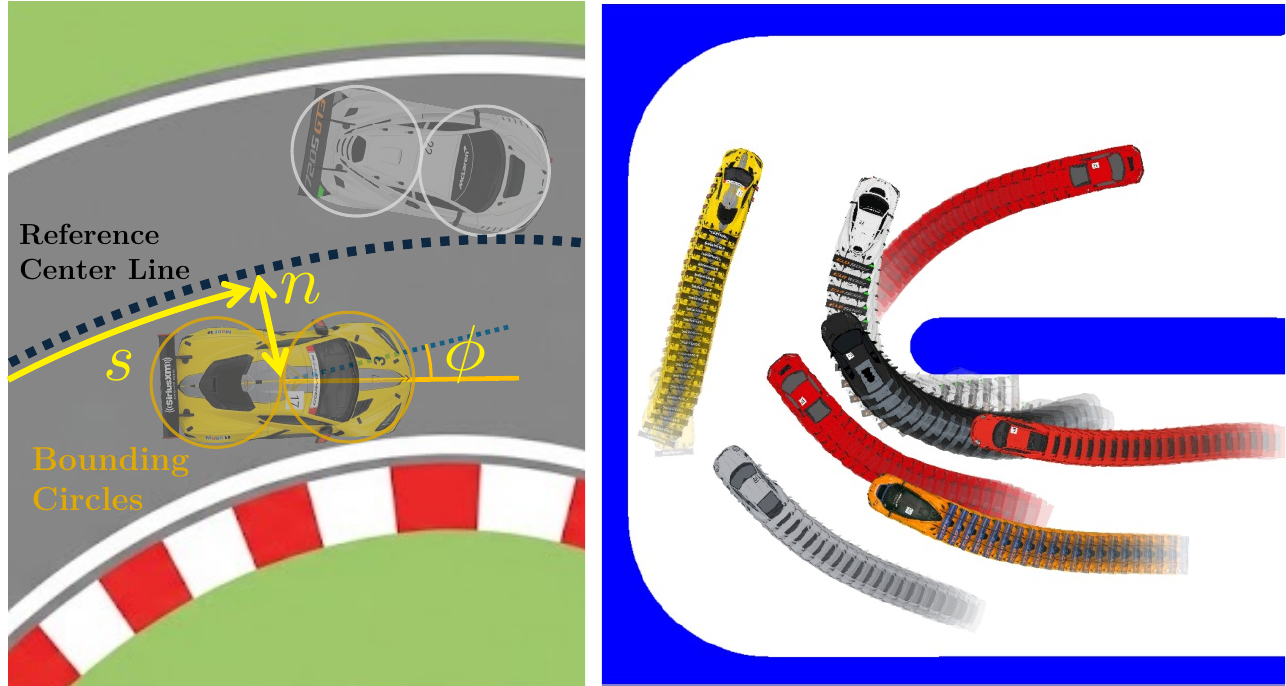}
\caption{\textbf{Left}: An illustration of the racing gamee. 
\textbf{Right}: 
A multi-exposure of the eight-car racing simulation. 
Agents exhibit aggressive overtaking and close-proximity maneuvers. 
}
\label{fig:racing_simulation}
\end{figure}

To evaluate the solver under highly dynamic and more adversarial conditions, we implemented a multi-agent car racing simulation.
The vehicle states are defined in the Frenet frame~\cite{kim2023frenet}, denoted as $ [s, n, \phi, v_f, v_s]^\top$.
Here, $s$ denotes the longitudinal arc length along the reference centerline, $n$ is the lateral deviation, $\phi$ is the heading angle relative to the reference tangent, and $ v_f$ and $ v_s$ are the longitudinal and lateral speeds, respectively.
The agents follow a kinematic bicycle model projected onto this curvilinear space~\cite{rajamani2006vehicle}.
The cost function is formulated to maximize progress along the path while minimizing lateral deviation, heading deviation, and speed deviation from a target speed. 
Safety is enforced through (i) track-boundary constraints on the lateral deviation $n$, and (ii) inter-agent collision constraints, where each vehicle is represented by two bounding circles.
Figure~\ref{fig:racing_simulation} illustrates the Frenet frame and the bounding circles used in the racing game.

% Safety is ensured via two sets of spatial constraints.
% %
% First, strict track boundary limits are imposed on the lateral deviation $n$ to keep vehicles securely within the drivable surface.
% %
% Second, inter-agent collision avoidance is enforced by modeling each car as two distinct bounding circles.
% %
% A collision constraint is violated if a circle associated with one vehicle intersects the circle of another vehicle.

%
% the Euclidean distance between any circle of one car and any circle of an opponent falls below the predefined safety distance.

The average value of $\gamma$ in the racing scenario is 2.2E-4.
This value is two orders of magnitude larger than that of the highway-merging scenario, indicating that the racing game exhibits substantially stronger competitive interactions.

Figure~\ref{fig:racing_simulation} illustrates the trajectories of eight vehicles in the racing game.
Overall, the agents race in close proximity while avoiding collisions and track-boundary violations.
Noticeably, the white car takes an aggressive inside line at the corner entry, forcing the black car and the middle red car to move outward to avoid a collision.
%
%At the corner exit, the white car shifts to a wider line and blocks both vehicles from overtaking. 
%\sg{The white car "shifts to a wider line and blocks both vehicles" is not very clear from Fig 2.}
% while the leading red car, unobstructed by nearby competitors, follows a nominal racing line.
%
Similar emergent behaviors, including blocking, overtaking, and other human-like interactions, are consistently observed in both simulation and physical experiments.
Further details of the physical experiments are provided in Section~\ref{sec:exp}.

\subsection{Benchmark Results}
We benchmarked the computational speed (wall-clock time) of the proposed iterative refinement method against a direct variational GNE solver.
Both methods were tasked with finding a local GNE from identical randomized initial conditions, tracking the runtime necessary to reach the specified residual tolerance.
As illustrated in Fig.~\ref{fig:runtime_benchmarks}, the proposed symmetric splitting approach yields significant computational advantages.
By utilizing the potential component $S$ as a preconditioner, our method effectively decouples the most computationally demanding segments of the Newton step, thus reducing the wall-clock time per iteration while maintaining convergence.

The initial states used in the benchmarks were chosen to place cars closely together to promote dense interaction.
In a more realistic situation, the interaction is usually much sparser, leading to faster and easier convergence.

{
\begin{figure}[tbp]
\centering
\begin{subfigure}{\columnwidth}
\centering
\includegraphics[ width=\textwidth]{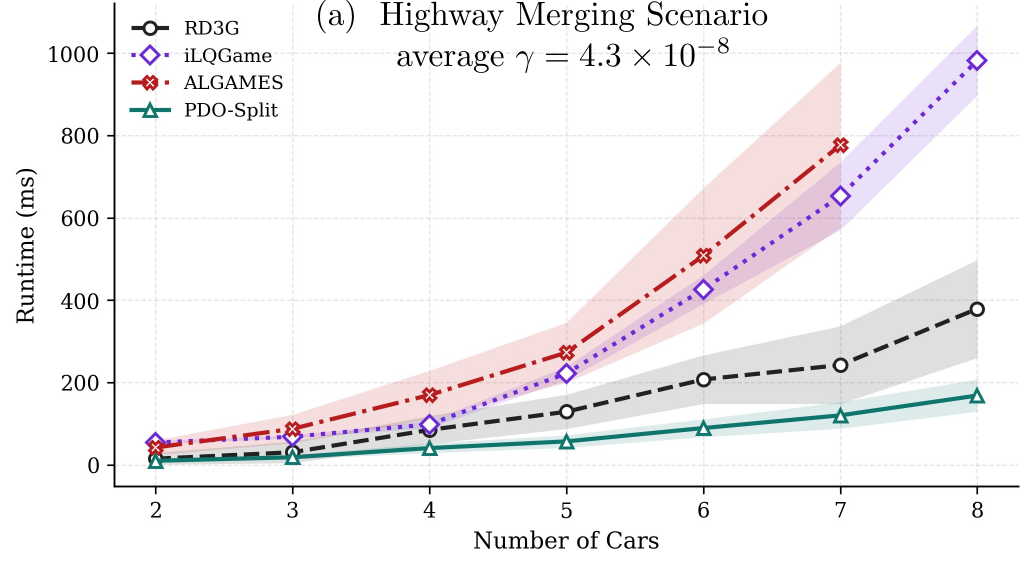}
\vspace{-20pt}
% \caption{Highway Merging Scenario: average $\gamma = 4.3\times 10^{-8}$}
\end{subfigure}\hfill

\vspace{+15pt}
\begin{subfigure}{\columnwidth}
\centering
\includegraphics[width=\textwidth]{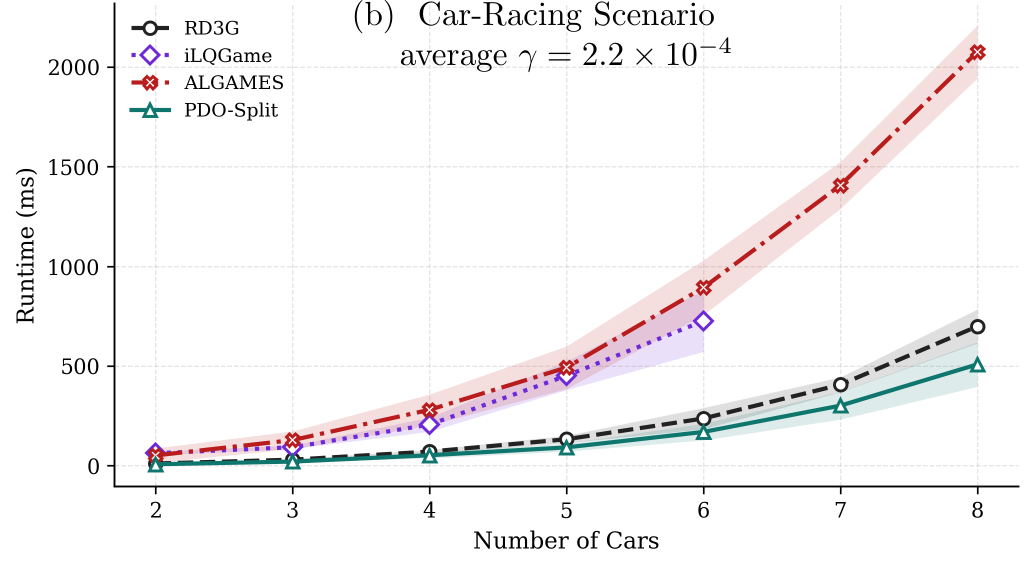}
\vspace{-20pt}
% \caption{Car-Racing Scenario: average $\gamma = 2.2 \times 10^{-4} $}
\end{subfigure}
\caption{Runtime benchmarks comparing the existing solvers with the proposed PDO-Split method, averaged over 100 initial conditions, with shade indicating the standard deviation.
Missing data points indicate failure to return a solution.
}
\label{fig:runtime_benchmarks}
\end{figure}
}

\subsection{Additional Simulations}

\begin{figure*}
    \centering
    % Width is set to 0.24 to leave a small gap (\hfil) between the 4 images
    \subfloat[\shortstack{Air Traffic Control\\ $\gamma = 3.7\!\times\!10^{\text{-}8}$}]{
        \includegraphics[height=2.7cm,clip,trim=100 100 40 130,]{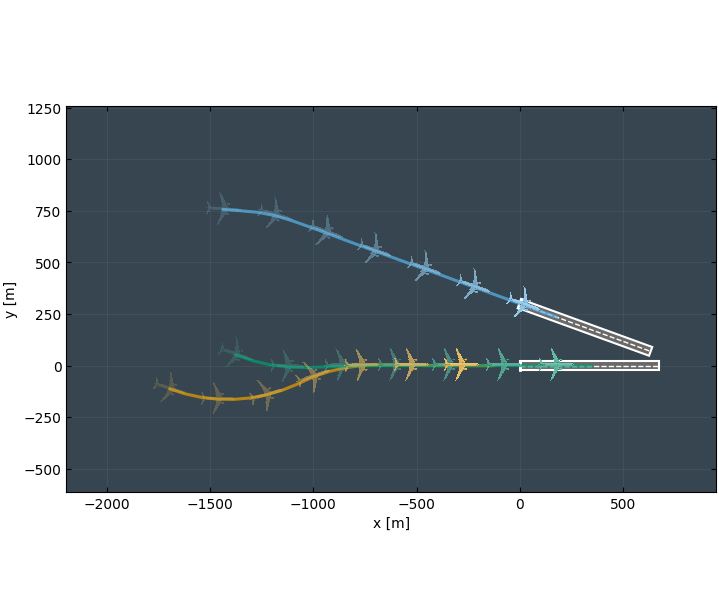}
        \label{fig:atc//}
    }
    \subfloat[\shortstack{Tandem Drifting \\ $\gamma = 2.6 \!\times\!10^{\text{-}7}$}]{
        \includegraphics[height=2.7cm,clip,trim=60 75 50 90,]{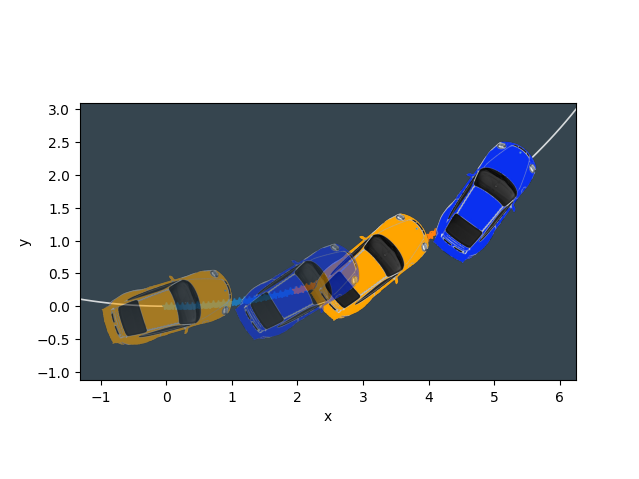}
        \label{fig:drift}
    }
    \subfloat[\hspace{-5pt} \shortstack{Tractor-Trailer\\  ~~$\gamma = 8.1 \!\times\! 10^{\text{-}8}$}]{
        \includegraphics[height=2.7cm,clip,trim=110 50 80 60,]{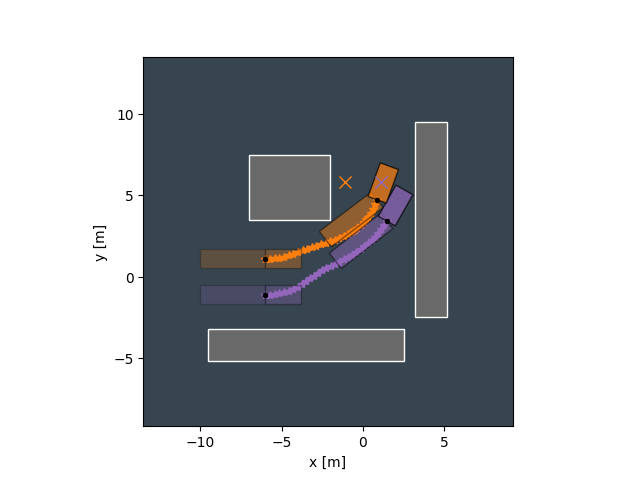}
        \label{fig:trailer}
    }
    \subfloat[\hspace{-8pt}\shortstack{Rocket Landing \\ ~~~$\gamma = 6.9 \!\times\! 10^{\text{-}13}$}]{
        \includegraphics[height=2.7cm,clip,trim=100 45 80 50,]{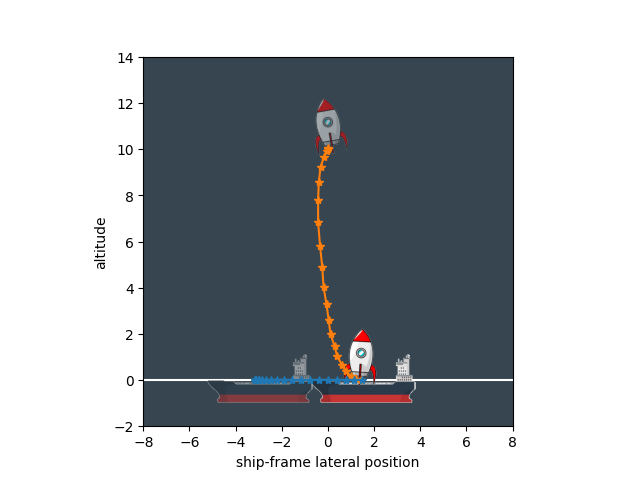}
        \label{fig:rocket}
    }
    \vspace{-10pt}
    \caption{Visualizations of agent behaviors in additional dynamic games.}
    \label{fig:all_games}
\end{figure*}

To further demonstrate the versatility and generalizability of the proposed iterative-refinement solver, we evaluated it on a broader suite of dynamic games.
Visualizations of these additional scenarios are provided in Fig.~\ref{fig:all_games}, together with brief descriptions of their objectives and interaction structures.
The observed computational performance and qualitative behaviors are consistent with those reported for the racing and merging problems; hence, we omit detailed analysis due to space limitations.
Simulation animations are included in the supplemental material.
The source code and simulation details will be made publicly available on the project website upon publication.

% \panos{These simulations are very brief to be of any use to the reader. Can we provide more details? Are you planning to provide the videos?}
% \nick{We discussed two popular games in detail and showed their benchmark results. I believe this is sufficient and consistent with other papers in the field, especially since we also did a physical experiment. 
% The additional examples are there just to show readers that the solver is applicable to a variety of types of problems. 
% I think we can show the simulation snapshots and not go into details, the gifs, videos etc. will be on the project website.
% }
% \panos{"ust to show readers that the solver is applicable to a variety types of problems": This is not shown however! You only say that you tested it and you have a picture that is not informative. I think adding a couple of sentences and adding the videos from the animations would be helpful. Otherwise, }
% \nick{Sounds good. I edited the discussion to explain the unique aspect for each problem, and noted that videos of the simulations will be included in the experiment video.}

\noindent
\textit{a) Decentralized Air Traffic Control:}
Three autonomous fixed-wing aircraft coordinate their approaches to two runways.
Each aircraft seeks to minimize its landing time while maintaining safe separation from the others.
As shown in Fig.~\ref{fig:all_games}(a), the yellow aircraft slows down and takes a small detour to maintain separation from the green aircraft while approaching the horizontal runway.
Meanwhile, the green aircraft accelerates to create sufficient spacing for the yellow aircraft.
%
% In the merging and racing game, the cost functions are convex, and the non-convexity mainly comes from the collision constraints.
%
The multiple runways in this problem lead to a nonconvex cost landscape and constraints for each agent, thus testing the PDO-Split's performance under nonconvexity.

\noindent
\textit{b) Tandem Drifting:}
Two autonomous vehicles perform a coordinated drifting maneuver in close proximity.
In Fig.~\ref{fig:all_games}(b), we observe that each vehicle maintains stability, avoids collision, and synchronizes its motion with the other vehicle throughout the maneuver.
This scenario highlights the ability to operate under highly nonlinear dynamics near physical limits.

\noindent
\textit{c) Tractor--Trailer Maneuvering:}
A tractor-trailer vehicle system is commanded to maneuver in a tightly constrained corridor.
As seen in Fig.~\ref{fig:all_games}(c), the two vehicles successfully avoid collisions with each other, as well as avoid the surrounding obstacles, while respecting the constraints on the maximum misalignment between the tractor and trailer.
%which in practice can cause vehicle damage.
%
Compared to the Ackermann-steered vehicles~\cite{rajamani2006vehicle}, this scenario involves more challenging dynamics, demonstrating the ability to handle non-holonomic motion in obstacle-dense environments.

\noindent
\textit{d) Rocket Landing:}
A reusable booster and an Autonomous Spaceport Drone Ship (ASDS) are required to coordinate a dynamic rendezvous.
The ASDS begins with a random initial velocity and is tightly constrained by acceleration limits. 
Rather than forcing the sluggish ASDS to a complete halt, the solver discovers a coupled, energy-minimizing cooperative strategy (Fig.~\ref{fig:all_games}(d)).
The rocket matches the ship's lateral velocity and intentionally touches down at a slight tilt, riding the boundary of the maximum allowable misalignment constraint.
This example highlights the algorithm's capacity to exploit the full constraint envelope in fully cooperative settings, discovering efficient maneuvers.
%\sg{What's counter intuitive?}

%=============================================%
\section{Experimental Validation}\label{sec:exp}
%=============================================%

\begin{figure}[b!]
\vspace{-10pt}
  \centering
  \begin{subfigure}[b]{0.48\linewidth}
  \includegraphics[width=\linewidth]{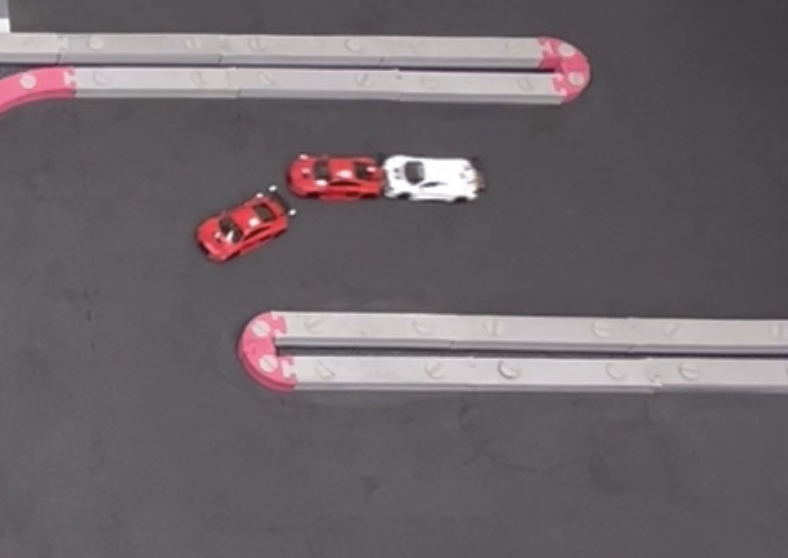}
  \end{subfigure}
  \begin{subfigure}[b]{0.48\linewidth}
  \includegraphics[width=\linewidth]{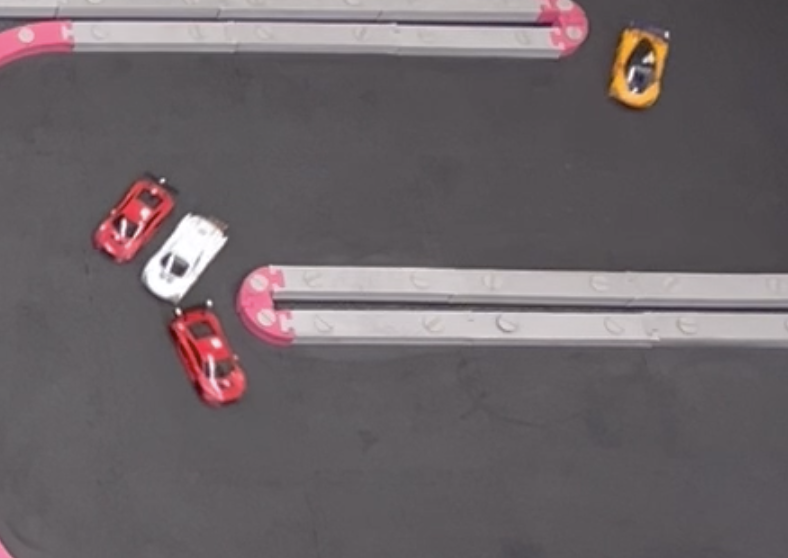}
  \end{subfigure}
  \\[0.8ex]
  \begin{subfigure}[b]{0.48\linewidth}
  \includegraphics[width=\linewidth]{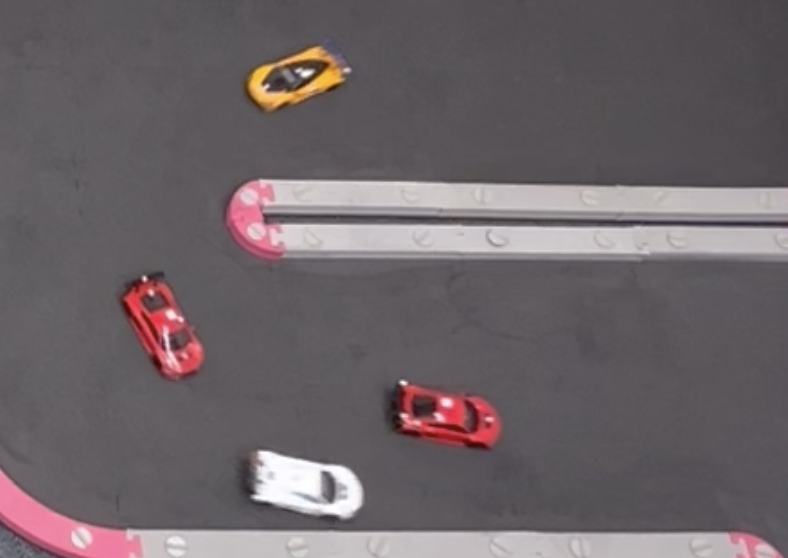}
  \end{subfigure}
  \begin{subfigure}[b]{0.48\linewidth}
  \includegraphics[width=\linewidth]{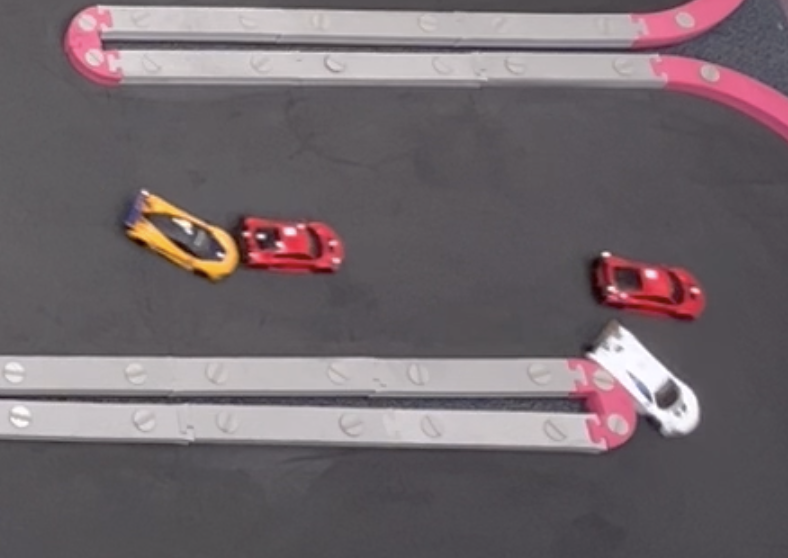}
  \end{subfigure}
  \caption{Snapshots of an overtake during the experiment.
  }\label{fig:exp}
\end{figure}

To validate the robustness and real-time capability of the proposed algorithm in the physical world, we deployed the solver with the car racing problem as a receding horizon planner on a fleet of miniature autonomous race cars.
The experiment utilizes eight BuzzRacer~\cite{buzzracer} platforms navigating a densely populated, tightly constrained closed-circuit track.
Operating eight vehicles simultaneously presents a formidable challenge for real-time game-theoretic planners due to the exponential growth in interaction complexity and the high density of active collision constraints.
The proposed solver calculates a local GNE over the 20-step prediction horizon for all eight agents at every planning cycle.
These high-level reference trajectories are subsequently tracked by decentralized, vehicle-specific lower-level controllers operating at 100Hz.
%Despite the increased dimension of the joint optimization problem compared to previous 4-car iterations, 
The iterative refinement solver consistently yields valid GNE trajectories within the strict timing margins required for continuous, high-speed operation. 

During the physical trials,  the solver consistently finds solutions to the 8-car racing problem in less than 200\,ms, significantly lower than the average runtime reported in the benchmark simulations. 
This difference is due to the tightly coupled initial state we used in the benchmarks.
In a real-world scenario, cars are rarely in such close proximity, making the game easier to solve.
The obtained control strategies exhibited complex adversarial behaviors, such as defensive blocking, coordinated overtakes, and sustained side-by-side racing, without violating track limits or inter-agent safety boundaries.
Figure~\ref{fig:exp} shows snapshots of an overtake (white car) observed during the experiment. 
Video of the physical experiment is provided in supplemental materials and will be made publicly available upon publication.

%=============================================%
\section{Conclusion}
%=============================================%

This work introduced the class of $\gamma$-potential-dominant games and exploited its structure to accelerate game-theoretic model predictive control.
By decomposing the KKT matrix into symmetric potential and skew-symmetric competitive components, we developed a Newton-type solver that uses the potential component as a preconditioner and accounts for competitive interactions through iterative refinement.
We established convergence guarantees, derived an a posteriori stopping criterion.
The symmetric structure further enables a reusable \ldlt{} factorization, reducing the cost of repeated linear solves.
Numerical benchmarks demonstrated improved runtime and scalability over existing methods, while experiments with eight miniature autonomous race cars validated the solver's real-time applicability.
%
% Future work will investigate less conservative certificates for potential dominance, adaptive refinement strategies, and extensions to larger-scale games with uncertainty and distributed information.

\bibliographystyle{ieeetran}
\bibliography{refs}
\balance

\appendix

\noindent
\textit{Proof of Theorem~\ref{thm:equivalent-lyapunov}}:
    For notational simplicity, we write $S$ and $A$ for $S(z)$ and $A(z)$.

    \noindent
    $\boxed{{(i)}\Leftrightarrow{(ii)}}$
    By definition, the game is $\gamma$-potential-dominant at $z$ if $\rho(S^{-1} A) < \gamma$.
    Equivalently, we have $\rho(\frac{1}\gamma S^{-1} A) < 1$, which holds, per the Lyapunov stability theory~\cite{lyapunov}, \emph{if and only if} there exists a positive definite matrix $\tilde P$ such that 
    \begin{equation}
        \label{eqn:lyapunov-inequality}
        \big(\tfrac{1}{\gamma} S^{-1} A \big)^\top \tilde P \big(\tfrac{1}{\gamma} S^{-1} A \big) - \tilde P \prec 0.
    \end{equation}
    Let $P = S^{-\top} \tilde P S^{-1}$.
    Since $S$ is invertible, $P$ is also positive definite, and~\eqref{eqn:lyapunov-inequality} can be re-written as 
    \[
        \tfrac{1}{\gamma^2} A^\top P A - S^{\top} \tilde P S \prec 0.
    \]
    Condition~$(ii)$ then follows immediately.
    \vspace{+5pt}

    \noindent
    $\boxed{{(ii)}\Leftrightarrow{(iii)}}$
    We now prove the equivalence between the linear matrix inequality and the induced $P$-norm bound.
    
    First, we prove the direction
    $(ii) ~\Rightarrow~ (iii)$: 
    Assume Condition~(ii) holds, i.e., there exists $P\succ 0$ such that $A^\top P A - \gamma ^2 S^\top P S \prec 0$.
    By negative definiteness,
    \begin{equation}
        \label{apdx-eqn:000}
        x^\top \Big( A^\top P A - \gamma^2 S^\top P S \Big) x < 0, \quad \forall x \neq 0.
    \end{equation}
    Let $Q = S^\top P S$. Since $P\succ 0$ and $S$ is invertible, we have $Q \succ 0$.
    Rearranging~\eqref{apdx-eqn:000} yields $(S^{-1}Ax)^\top Q (S^{-1}Ax) < \gamma^2 x^\top Q x$, which simplifies to $\norm{Q}{S^{-1}Ax}^2 < \gamma^2 \norm{Q}{x}^2$ for all $x \ne 0$. 
    As a result, $\norm{Q}{S^{-1}A} \!=\! \sup_{\Vert x\Vert_Q=1} \norm{Q}{S^{-1}Ax} < \gamma$.
    
    \vspace{+5pt}
    To prove $(iii) ~\Rightarrow~ (ii)$, 
    assume that condition~$(iii)$ holds, i.e., there exists $Q\succ 0$ such that $\norm{Q}{S^{-1}A} < \gamma$. 
    Then, for all $x \neq 0$, $\norm{Q}{S^{-1}Ax} < \gamma \norm{Q}{x}$. 
    Squaring both sides yields
    \[
        x^\top (S^{-1}A)^\top Q (S^{-1}A) x < \gamma^2 x^\top Q x, \quad \forall \, x \ne 0.
    \]
    Thus, $(S^{-1}A)^\top Q (S^{-1}A) \prec \gamma^2 Q$. 
    Let $P=S^{-\top}QS^{-1}$, so that $Q=S^\top P S$. 
    Then we have
    \[
    A^\top P A-\gamma^2 S^\top P S\prec0,
    \]
    which is condition~$(ii)$.
    \hfill$\square$

\end{document}